\documentclass[pdftex]{sajl1}

\newif\ifarxiv  
\arxivtrue   

\ifarxiv
  \volume{Preprint}
  \issue{arXiv v2}
\else
  \volume{X}
  \issue{X}
\fi
\year{2026}
\usepackage{amsmath,amssymb,amsfonts,mathtools}
\usepackage[T1]{fontenc}
\usepackage[utf8]{inputenc}
\usepackage{lmodern}
\usepackage[hidelinks]{hyperref}
\usepackage[expansion=false]{microtype}
\newtheorem{definition}{Definition}[section]
\newtheorem{theorem}[definition]{Theorem}

\newtheorem{proposition}[definition]{Proposition}

\newtheorem{example}[definition]{Example}

\newcommand{\negr}[1]{\boldsymbol{#1}}
\newenvironment{proof}{\noindent\textbf{Proof.}\ }{\hfill $\negr{\blacksquare}$\par}

\newcommand{\Expr}{\mathsf{Expr}}
\newcommand{\BlindExpr}{\mathsf{BlindExpr}}
\newcommand{\Token}{\mathsf{Token}}

\newcommand{\Rat}{\mathbb{Q}}

\newcommand{\Encl}{\mathsf{Encl}}
\newcommand{\BlindEncl}{\mathsf{BlindEncl}}
\newcommand{\Eval}{\mathsf{eval}}
\newcommand{\TokenConsistent}{\mathsf{TokenConsistent}}
\newcommand{\TokenEnv}{\mathsf{TokenEnv}}
\newcommand{\Meas}[3]{\mathsf{Meas}(#1,#2,#3)}
\newcommand{\Exact}[2]{\mathsf{Exact}(#1,#2)}
\newcommand{\forgetTokens}{\mathsf{forgetTokens}}
\newcommand{\RewritesTo}{\mathrel{\rightsquigarrow}}
\newcommand{\Interchangeable}{\mathrel{\rightleftharpoons}}
\newcommand{\OneWayOnly}{\mathrel{\mathsf{OneWayOnly}}}
\newcommand{\Licensed}{\mathsf{Licensed}}
\newcommand{\lean}[1]{\nolinkurl{#1}}

\title{Token-Sensitive Enclosure Semantics}{Token-Sensitive Enclosure Semantics for Measurement-Bearing Expressions}

\author{D. B. Hulak, A. F. Ramos, and R. J. G. B. de Queiroz}{David B. Hulak, Arthur F. Ramos, and Ruy J. G. B. de Queiroz}

\begin{document}

\maketitle

\begin{abstract}
Token identity is semantic information for measurement-bearing expressions.
Intervals, dimension tags, and token-erased syntax can say what values a
measured leaf may take, but they cannot say whether two occurrences name the
same observation or two fresh observations.  We give a small formal semantics
in which each measured leaf carries an interval of possible exact values and an
opaque observation-event token.  Here ``token'' means an identity for a
measurement event, not a lexical token of the source syntax.  The denotation of
an expression is its warranted enclosure: the set of exact values still
justified by hidden-value environments that assign one value to each
observation token and respect the declared intervals.  Over this semantics,
\(e \RewritesTo e'\) is a claim-tightening judgment, equivalently enclosure
containment \(\Encl(e') \subseteq \Encl(e)\), while interchangeability is
equality of enclosures.  The distinction is visible in cancellation, background
subtraction, and self-division: reusing one token gives interchangeability with
the expected simplified expression, while using distinct tokens gives only
one-way containment.  We prove that provenance-blind summaries of the kind
studied here, preserving intervals, dimension tags, and token-erased syntax,
are insufficient to recover the correct rewrite class.  The formal results are
mechanized in Lean 4 with no \texttt{sorry} or \texttt{admit} placeholders.
\end{abstract}

\keywords{computational logic, formal semantics, measurement uncertainty, provenance, interval arithmetic, rewriting, Lean 4.}

\section*{Introduction}

The same algebraic expression can have different logical content depending on how
its measured leaves were produced.  If a single measurement is written twice as
$x-x$, both occurrences denote the same hidden exact value and the result is
forced to be zero.  If two distinct measurements with the same interval
bound are written as $x_1-x_2$, the value zero is compatible with the
measurements, but it is not forced: the hidden values may differ.  Standard
mathematical notation tends to erase this difference.  So do many formal
models that retain interval bounds, dimensions, or expression shape while
forgetting the identity of the observation event.

A typical background-correction calculation has exactly this form.  Suppose a
sample reading \(s\) is corrected by subtracting a background reading \(b\).  If
the expression \((s+b)-b\) reuses the same recorded background observation, then
the background contribution cancels as a matter of denotation.  If the two
background leaves instead denote two freshly measured background readings with
the same advertised interval, the expression may differ from \(s\).  The
intervals and dimensions visible at the leaves are identical in the two cases;
only observation identity distinguishes them.

This paper isolates that distinction as a small problem in computational
logic.  We define a token-sensitive denotational semantics for a minimal
arithmetic language of measurement-bearing expressions.  A measured leaf
\(\Meas{t}{I}{d}\) carries an observation-event token \(t\), an interval \(I\)
of possible exact values, and a dimension tag \(d\).  The token is an opaque
name for the observation event, not a lexical token of the expression syntax.
A hidden-value environment assigns one rational value to each observation
token.  The warranted enclosure of an expression is the set of exact values
that remain possible under those assignments.  Repeated occurrences of the
same token are therefore evaluated with the same hidden value, while distinct
tokens may vary independently.

The second layer treats rewrite notation as enclosure logic rather than as an
ordinary program-reduction relation.  The judgment
\[
  e \RewritesTo e' \quad\text{means}\quad \Encl(e') \subseteq \Encl(e).
\]
Thus the target \(e'\) makes a stronger, more specific claim: every value it
warrants was already warranted by the source \(e\).  In logical terms, the
enclosure claim made by \(e'\) entails the weaker claim made by \(e\).  Equality
of enclosures is the stricter, bidirectional relation and is called
interchangeability.

The paper makes the following contributions.
\begin{enumerate}
\item We give a token-sensitive enclosure semantics for a minimal arithmetic
  language with exact constants and measured leaves.
\item We define one-way containment rewrites and enclosure equality
  interchangeability over that semantics.
\item We prove exact-fragment conservativity: without measured leaves, both
  rewrite classifications collapse to equality of exact denotations.
\item We mechanize three theorem families showing that cancellation, shared
  background subtraction, and self-division are interchangeable in the
  same-token case but only one-way in the distinct-token case.
\item We prove a provenance-blind limitation theorem: token-erased summaries
  can agree while the token-sensitive rewrite classes differ.
\end{enumerate}

The rest of the paper defines the semantics, introduces the rewrite
classification layer, proves exact-fragment conservativity, presents the three
classification families, proves the provenance-blind limitation result,
records the Lean artifact lineage, and situates the contribution relative to
rewriting, interval methods, measurement uncertainty, provenance, and
mechanized semantics.

\section{A token-sensitive expression semantics}

\subsection{Syntax}

Let \(\Token\) be an opaque set of observation-event identifiers and let
intervals be closed rational intervals \(I=[\ell,u]\) with \(\ell \leq u\).
Tokens are opaque in the formal sense that the semantics uses only equality of
tokens, not their internal structure.  The Lean development uses \(\Rat\) as
the value carrier.  This is a deliberate modeling choice rather than a
fundamental restriction: rationals are decidable, give total interval endpoint
comparisons without an analytic order theory, and let the mechanization stay
constructive without importing classical real analysis.  For the theorem
families proved here, the separating witnesses are rational interval endpoints,
so the same classifications would persist over \(\mathbb{R}\); a real-valued
development would mainly add analytic infrastructure.

\begin{definition}[Expressions]
Expressions are generated by
\[
\begin{array}{rcl}
e &::=& \Exact{q}{d}
    \mid \Meas{t}{I}{d}
    \mid e+e
    \mid e-e
    \mid e\cdot e
    \mid e/e
    \mid -e ,
\end{array}
\]
where \(q\in\Rat\), \(t\in\Token\), \(I\) is a rational interval, and \(d\) is
a dimension tag.
\end{definition}

The dimension component is present so that examples can keep track of the
intended physical kind of a measured quantity.  In the current mechanization,
however, the dimension type has a single syntactic constructor and the
semantics does not compute with dimensions.  The results below should therefore
not be read as a developed units-of-measure type system.

\subsection{Token environments and warranted enclosures}

\begin{definition}[Token environment]
A token environment is a function
\[
  \sigma : \Token \to \Rat
\]
assigning one hidden exact rational value to each observation token.
\end{definition}

\begin{definition}[Token consistency]
An environment \(\sigma\) is token-consistent with an expression \(e\), written
\(\TokenConsistent(\sigma,e)\), when every measured leaf
\(\Meas{t}{[\ell,u]}{d}\) occurring in \(e\) satisfies
\[
  \ell \leq \sigma(t) \leq u.
\]
The condition is indexed by tokens, not by syntactic positions.  Two
occurrences with the same token must therefore share the same hidden value.
Structurally, exact leaves impose no condition; a measured leaf imposes the
interval condition above; and each arithmetic constructor combines the
conditions of its immediate subexpressions by conjunction.
\end{definition}

Evaluation is defined by structural recursion on expressions.  For a token
environment \(\sigma\),
\[
\begin{aligned}
\Eval(\sigma,\Exact{q}{d}) &= q,\\
\Eval(\sigma,\Meas{t}{I}{d}) &= \sigma(t),\\
\Eval(\sigma,e_1\circ e_2) &=
  \Eval(\sigma,e_1)\circ\Eval(\sigma,e_2),\\
\Eval(\sigma,-e) &= -\Eval(\sigma,e),
\end{aligned}
\]
where \(\circ\in\{+,-,\cdot,/\}\) uses the corresponding rational operation.
Division is total in the mechanization: Lean~4 rational division satisfies
\(x/0=0\).  The self-division theorems below guard against this degenerate case
by requiring the interval to be strictly positive, so that token consistency
forces a nonzero denominator.

\begin{definition}[Warranted enclosure]
The token-sensitive warranted enclosure of \(e\) is
\[
  \Encl(e) =
  \{\,v\in\Rat \mid
    \exists \sigma.\ \TokenConsistent(\sigma,e)
       \wedge \Eval(\sigma,e)=v\,\}.
\]
\end{definition}

This definition is the central object of the paper.  It says that an
expression denotes exactly the rational values still justified by the declared
measurement intervals and by the equality constraints induced by token reuse.
This is the sense in which the enclosure is ``warranted'' by the available
measurement information.

Equivalently, token environments can be read as possible worlds.  A world fixes
one hidden exact value for each observation token, and token consistency selects
the worlds compatible with the interval declarations appearing in the
expression.  The enclosure \(\Encl(e)\) is then the image of \(e\)'s denotation
over all compatible worlds.  Within each compatible world, reusing a token
imposes an equality constraint between occurrences of that observation; using a
fresh token removes that constraint.

\begin{example}[Token identity changes an enclosure]
Let \(I=[2,5]\).  The expression \(\Meas{t}{I}{d}-\Meas{t}{I}{d}\) has
enclosure \(\{0\}\), because every consistent environment evaluates both
occurrences at the same value \(\sigma(t)\).  The expression
\(\Meas{t_1}{I}{d}-\Meas{t_2}{I}{d}\), with \(t_1\neq t_2\), can realize both
\(0\) and \(3\): choose \(\sigma(t_1)=\sigma(t_2)=2\) for \(0\), or
\(\sigma(t_1)=5,\sigma(t_2)=2\) for \(3\).  The visible intervals and
dimension tags are the same, but the token structure is not.
\end{example}

\begin{center}
\begin{tabular}{@{}ll@{}}
\textbf{Paper notation} & \textbf{Lean name}\\
\(\Token\) & \texttt{Token}\\
\(\Expr\) & \texttt{Expr}\\
\(\TokenEnv\) & \texttt{TokenEnv}\\
\(\TokenConsistent(\sigma,e)\) & \texttt{TokenConsistent sigma e}\\
\(\Encl(e)\) & \texttt{Enclosure e}\\
\(\Licensed(e,e')\) & \texttt{Licensed e e'}\\
\(e\RewritesTo e'\) & \texttt{RewritesTo e e'}\\
\(e\Interchangeable e'\) & \texttt{Interchangeable e e'}\\
\(\OneWayOnly(e,e')\) & \texttt{OneWayOnly e e'}\\
\(\forgetTokens\) & \texttt{forgetTokens}\\
\(\BlindEncl\) & \texttt{BlindEnclosure}
\end{tabular}
\end{center}

\section{Rewrite classification as enclosure logic}

The semantic objects in this section form a small order-theoretic logic of
measurement claims.  An expression denotes a set of possible exact values.
Containment is entailment between such claims: the smaller enclosure is the
stronger claim.  Interchangeability is observational equivalence at the level
of warranted values.  Token erasure is an abstraction from token-sensitive
expressions to provenance-blind expressions, and Section~5 proves an
insufficiency result for that abstraction: the token-sensitive rewrite class
is not determined by intervals, dimensions, and token-erased syntax alone.

\begin{definition}[Claim-tightening rewrite]
For expressions \(e,e'\), define
\[
  e \RewritesTo e'
  \quad\text{iff}\quad
  \Encl(e')\subseteq\Encl(e).
\]
The target expression makes a stronger, more specific enclosure claim: every
value it permits is already permitted by the source expression.  Thus
\(e\RewritesTo e'\) records semantic entailment from the target claim back to
the source claim.
\end{definition}

In the Lean source, the underlying denotational containment predicate is
\(\Licensed(e,e')\).  The paper writes the corresponding rewrite-classification
judgment as \(e\RewritesTo e'\), because the result being classified is a
rewrite claim.  The intended reading is claim tightening: the displayed target
is licensed because it entails the source enclosure claim.

\begin{definition}[Interchangeability and one-way-only rewrites]
The expressions \(e\) and \(e'\) are interchangeable, written
\(e\Interchangeable e'\), when
\[
  \Encl(e)=\Encl(e').
\]
The rewrite from \(e\) to \(e'\) is one-way-only when
\[
  e\RewritesTo e'
  \quad\text{and}\quad
  \neg(e'\RewritesTo e).
\]
\end{definition}

The terminology separates two logical readings of a rewrite.  If a rewrite is
intended to preserve all warranted values, interchangeability is the relevant
classification.  If the rewrite is intended as a one-way strengthening,
containment records that direction explicitly.

\begin{proposition}[Basic laws]
The relation \(\RewritesTo\) is reflexive and transitive.
Interchangeability is an equivalence relation.  Moreover,
\[
  e\Interchangeable e'
  \quad\text{iff}\quad
  (e\RewritesTo e')\ \text{and}\ (e'\RewritesTo e).
\]
\end{proposition}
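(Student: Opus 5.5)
The plan is to transport every claim along the map \(e\mapsto\Encl(e)\) into the lattice of subsets of \(\Rat\). By the definition of claim-tightening rewrites, \(e\RewritesTo e'\) is the statement \(\Encl(e')\subseteq\Encl(e)\). By the definition of interchangeability, \(e\Interchangeable e'\) is the statement \(\Encl(e)=\Encl(e')\). Once both relations are unfolded this way, each law reduces to a standard property of \(\subseteq\) or \(=\) on sets. None of the particular structure of \(\TokenConsistent\) or \(\Eval\) is needed.

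For \(\RewritesTo\), the steps are as follows.
\begin{enumerate}
\item Reflexivity is \(\Encl(e)\subseteq\Encl(e)\), which holds because inclusion is reflexive.
\item For transitivity, assume \(e_1\RewritesTo e_2\) and \(e_2\RewritesTo e_3\), i.e.\ \(\Encl(e_2)\subseteq\Encl(e_1)\) and \(\Encl(e_3)\subseteq\Encl(e_2)\). Chaining the inclusions gives \(\Encl(e_3)\subseteq\Encl(e_1)\), which is \(e_1\RewritesTo e_3\).
\end{enumerate}
The one thing to watch is the reversed orientation: the target's enclosure sits inside the source's. The inclusions must therefore be composed as \(\Encl(e_3)\subseteq\Encl(e_2)\subseteq\Encl(e_1)\).

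For interchangeability, reflexivity, symmetry and transitivity are exactly the corresponding properties of equality of sets, pulled back along \(\Encl\). A kernel of a function is always an equivalence relation, so there is nothing more to check.

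Finally, for the biconditional:
\begin{enumerate}
\item Forward direction: from \(\Encl(e)=\Encl(e')\), each inclusion \(\Encl(e')\subseteq\Encl(e)\) and \(\Encl(e)\subseteq\Encl(e')\) follows by rewriting with the equation. These are \(e\RewritesTo e'\) and \(e'\RewritesTo e\) respectively.
\item Converse: the two inclusions give equality by antisymmetry of \(\subseteq\), which is extensionality for sets. In Lean this is \texttt{Set.Subset.antisymm}, or \texttt{Set.ext} applied to the two memberships.
\end{enumerate}

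There is no genuine obstacle. The only step that uses anything beyond unfolding definitions is set extensionality in the converse direction. The main risk is bookkeeping: keeping the direction of the inclusion in \(\RewritesTo\) straight, so that transitivity and the biconditional pair the right inclusion with the right rewrite.
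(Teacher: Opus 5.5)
Your proposal is correct and follows essentially the same route as the paper. The paper's proof likewise unfolds \(\RewritesTo\) and \(\Interchangeable\) to inclusion and equality of enclosures and reads off each law, with the biconditional mechanized as \lean{interchangeable_iff_rewrites_both}.
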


\begin{proof}
All parts are immediate from set inclusion and set equality for enclosures.
The Lean theorem
\lean{interchangeable_iff_rewrites_both} is the mechanized form of the
last equivalence.
\end{proof}

\begin{proposition}[Exact-target criterion]
For any expression \(e\), rational \(q\), and dimension tag \(d\),
\[
  e\RewritesTo \Exact{q}{d}
  \quad\text{iff}\quad
  q\in\Encl(e).
\]
Also,
\[
  e\Interchangeable \Exact{q}{d}
  \quad\text{iff}\quad
  \Encl(e)=\{q\}.
\]
\end{proposition}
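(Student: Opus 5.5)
The plan is to first compute the enclosure of an exact leaf and then obtain both claims by unfolding the definitions of \(\RewritesTo\) and \(\Interchangeable\).

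\textbf{Step 1: the enclosure of an exact leaf.} The key lemma is \(\Encl(\Exact{q}{d})=\{q\}\). By the structural description of token consistency, an exact leaf imposes no condition, so \(\TokenConsistent(\sigma,\Exact{q}{d})\) holds for every environment \(\sigma\). By the evaluation clause, \(\Eval(\sigma,\Exact{q}{d})=q\) for every \(\sigma\). The two inclusions are then as follows.
\begin{itemize}
\item \(\{q\}\subseteq\Encl(\Exact{q}{d})\): exhibit any environment, for instance the constant function \(\sigma(t)=0\). Since \(\Rat\) is inhabited, some environment exists.
\item \(\Encl(\Exact{q}{d})\subseteq\{q\}\): any witness \(\sigma\) evaluates the leaf to \(q\).
\end{itemize}
The only point needing care is this existence of an environment, which is trivial here. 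In Lean it amounts to supplying \texttt{fun \_ => 0}.

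\textbf{Step 2: the first equivalence.} By the definition of the claim-tightening rewrite, \(e\RewritesTo\Exact{q}{d}\) means \(\Encl(\Exact{q}{d})\subseteq\Encl(e)\). By Step 1 this is \(\{q\}\subseteq\Encl(e)\), which is equivalent to \(q\in\Encl(e)\) by the elementary fact that a singleton is contained in a set iff its element belongs to it.

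\textbf{Step 3: the second equivalence.} By the definition of interchangeability, \(e\Interchangeable\Exact{q}{d}\) means \(\Encl(e)=\Encl(\Exact{q}{d})\). Rewriting the right-hand side with Step 1 gives exactly \(\Encl(e)=\{q\}\).

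One could instead route this through the Basic laws proposition, using that interchangeability is rewriting in both directions, but direct rewriting is shorter. No step is a genuine obstacle; the whole proof is definitional once Step 1 is in hand.
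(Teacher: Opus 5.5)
Your proof is correct and follows the paper's argument: both reduce \(\Encl(\Exact{q}{d})\) to the singleton \(\{q\}\), then unfold the definitions of \(\RewritesTo\) and \(\Interchangeable\). Your explicit check that some environment exists, which makes the singleton inhabited, is a detail the paper leaves implicit.
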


\begin{proof}
The enclosure of \(\Exact{q}{d}\) is the singleton \(\{q\}\).  The first
equivalence is therefore \(\{q\}\subseteq\Encl(e)\) iff \(q\in\Encl(e)\).
The second is the definition of interchangeability after reducing the exact
enclosure to that singleton.  These are mechanized as
\lean{rewrites_to_exact_iff_mem} and
\lean{interchangeable_exact_iff_singleton}.
\end{proof}

\section{Conservativity on exact expressions}

The token-sensitive semantics should not introduce distinctions in expressions
that have no measured leaves.

\begin{definition}[Exact fragment]
An expression is exact when every leaf is of the form \(\Exact{q}{d}\).
For such an expression \(e\), let \(\mathsf{exactValue}(e) = \Eval(\sigma, e)\)
for any token environment \(\sigma\); since every leaf is exact, the result is
independent of \(\sigma\) and equals the total rational evaluation under the
structural semantics of Section~1.2.
\end{definition}

\begin{theorem}[Exact-fragment conservativity]
If \(e\) and \(e'\) are exact expressions, then
\[
  e\RewritesTo e'
  \quad\text{iff}\quad
  \mathsf{exactValue}(e)=\mathsf{exactValue}(e')
\]
and
\[
  e\Interchangeable e'
  \quad\text{iff}\quad
  \mathsf{exactValue}(e)=\mathsf{exactValue}(e').
\]
Consequently there are no one-way-only rewrites between exact expressions in
this fragment.
\end{theorem}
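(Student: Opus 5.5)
The plan is to reduce everything to one lemma: for an exact expression \(e\), \(\Encl(e)=\{\mathsf{exactValue}(e)\}\). Once this holds, the first equivalence follows from the Exact-target criterion (or directly from singleton inclusion), and the second from equality of singletons.

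The lemma will be proved by structural induction on \(e\), restricted to the exact fragment, establishing two auxiliary facts simultaneously. The first is that \(\TokenConsistent(\sigma,e)\) holds for every \(\sigma\). At an exact leaf there is no condition. Measured leaves do not occur. At each constructor the condition is a conjunction of conditions that hold by the induction hypothesis. The second fact is that \(\Eval(\sigma,e)=\Eval(\sigma',e)\) for all \(\sigma,\sigma'\). At an exact leaf both sides equal \(q\). At \(e_1\circ e_2\) and \(-e_1\) this follows by congruence of the rational operations, including total division. This second fact is the independence claim that makes \(\mathsf{exactValue}\) well defined. Given both facts, \(\Encl(e)\) contains \(\mathsf{exactValue}(e)\), witnessed by any environment such as the constant-zero \(\sigma\). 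Conversely, any \(v\in\Encl(e)\) equals \(\Eval(\sigma,e)\) for some \(\sigma\), and therefore equals \(\mathsf{exactValue}(e)\).

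With the lemma in hand, the first claim reads \(\{\mathsf{exactValue}(e')\}\subseteq\{\mathsf{exactValue}(e)\}\), which holds iff the two values are equal. The second claim reads \(\{\mathsf{exactValue}(e)\}=\{\mathsf{exactValue}(e')\}\), which again holds iff the values are equal.

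For the consequence, suppose \(e\RewritesTo e'\) between exact expressions. Then the values are equal, so by the symmetric instance of the first equivalence \(e'\RewritesTo e\) also holds. Hence \(\OneWayOnly(e,e')\) fails.

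I do not expect a real obstacle. The only delicate point is the formal handling of ``exact'' as a predicate on expressions, so that the induction carries the hypothesis through subterms. Division by zero needs no special treatment, because evaluation is total and the argument uses only congruence, not algebraic identities.
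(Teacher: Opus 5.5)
Your proposal is correct and follows the paper's own proof. Both reduce the theorem to the fact that an exact expression has singleton enclosure \(\{\mathsf{exactValue}(e)\}\), then read off inclusion and equality of singletons; your induction makes explicit two points the paper leaves implicit, namely that every \(\sigma\) is token-consistent with an exact expression (so the enclosure is nonempty) and that evaluation does not depend on \(\sigma\).
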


\begin{proof}
An exact expression ignores the token environment, so every token-consistent
evaluation yields the same rational value.  Its enclosure is therefore the
singleton \(\{\mathsf{exactValue}(e)\}\).  Inclusion between two singleton
enclosures is equality of their elements, and equality of singleton enclosures
is the same equality.  The Lean development proves these statements as
\lean{rewrites_exact_iff_exactValue_eq},
\lean{interchangeable_exact_iff_exactValue_eq}, and
\lean{exact_expressions_not_one_way_only}.
\end{proof}

\section{Three token-sensitive rewrite families}

This section records the theorem families that demonstrate the difference
between token reuse and repeated measurement.

\subsection{Cancellation}

Fix a nondegenerate interval \(I=[\ell,u]\) with \(\ell<u\), a dimension tag
\(d\), and distinct tokens \(t_1\neq t_2\).

\begin{theorem}[Same-token cancellation]
For any token \(t\),
\[
  \Meas{t}{I}{d} - \Meas{t}{I}{d}
  \Interchangeable
  \Exact{0}{d}.
\]
\end{theorem}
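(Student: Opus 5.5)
By the exact-target criterion (second part of the Proposition in Section~2), it suffices to show
\[
  \Encl\bigl(\Meas{t}{I}{d}-\Meas{t}{I}{d}\bigr)=\{0\}.
\]
I prove this by double inclusion, unfolding the definition of the warranted enclosure.

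For \(\subseteq\), take any \(v\) in the enclosure. It comes with an environment \(\sigma\) that is token-consistent with the difference expression and satisfies \(\Eval(\sigma,\cdot)=v\). Evaluation unfolds structurally to \(\sigma(t)-\sigma(t)\). Both occurrences read the same entry \(\sigma(t)\), because evaluation is indexed by the token rather than by syntactic position. This difference is \(0\) by ordinary rational arithmetic, so \(v=0\). Token consistency is not even needed for this direction.

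For \(\supseteq\), I need to show that \(0\) is realized, which requires exhibiting one token-consistent environment. I take the constant environment \(\sigma(s)=\ell\) for every token \(s\). Since intervals satisfy \(\ell\le u\) (nondegeneracy \(\ell<u\) gives this as well), we have \(\ell\le\sigma(t)\le u\). Token consistency for the subtraction is the conjunction of the two identical leaf conditions, so it holds. The evaluation is \(\ell-\ell=0\).

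The only step with real content is the inhabitation witness in the \(\supseteq\) direction. Without some consistent environment the enclosure would be empty rather than \(\{0\}\), so that direction is where I would be careful. It relies only on the interval being nonempty. Everything else is definitional unfolding plus the ring identity \(x-x=0\) over \(\Rat\). In Lean this would be a \texttt{Set.ext} followed by \texttt{simp}/\texttt{ring} in one direction and \(\langle\texttt{fun \_ => }\ell,\ \ldots\rangle\) in the other.
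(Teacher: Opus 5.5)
Your proposal is correct and follows essentially the same approach as the paper. Every environment evaluates the expression to \(\sigma(t)-\sigma(t)=0\), and an endpoint assignment gives a consistent environment, so \(\Encl\) of the difference is \(\{0\}\); your explicit appeal to the exact-target criterion just spells out the final step the paper leaves implicit.
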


\begin{proof}
Every token-consistent environment assigns one value \(\sigma(t)\) to both
occurrences, so the expression evaluates to
\(\sigma(t)-\sigma(t)=0\).  Conversely, a consistent environment exists by
choosing any endpoint of \(I\).  Thus the enclosure is \(\{0\}\).  In Lean this
is \lean{same_token_subtraction_interchangeable_exact_zero}.
\end{proof}

\begin{theorem}[Distinct-token cancellation]
For \(t_1\neq t_2\),
\[
  \Meas{t_1}{I}{d} - \Meas{t_2}{I}{d}
  \OneWayOnly
  \Exact{0}{d}.
\]
\end{theorem}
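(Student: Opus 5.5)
By the definition of one-way-only rewrites, two facts are needed. First, \(\Meas{t_1}{I}{d} - \Meas{t_2}{I}{d} \RewritesTo \Exact{0}{d}\). Second, the converse \(\Exact{0}{d} \RewritesTo \Meas{t_1}{I}{d} - \Meas{t_2}{I}{d}\) fails. Write \(e\) for the subtraction. The Exact-target criterion reduces both facts to statements about \(\Encl(e)\).

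For the forward direction, the Exact-target criterion says it suffices to show \(0\in\Encl(e)\). I will take the environment \(\sigma\) with \(\sigma(t)=\ell\) for every token \(t\); a constant function avoids any case analysis on tokens. Then \(\sigma(t_1)=\sigma(t_2)=\ell\in I\), so \(\TokenConsistent(\sigma,e)\) holds, because both measured leaves meet their interval condition and consistency of a subtraction is the conjunction of its parts. The value is \(\Eval(\sigma,e)=\ell-\ell=0\).

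For the backward direction, note that \(\Encl(\Exact{0}{d})=\{0\}\). The claim \(\Exact{0}{d}\RewritesTo e\) would mean \(\Encl(e)\subseteq\{0\}\). To refute it, I will exhibit a nonzero element of \(\Encl(e)\). Since \(t_1\neq t_2\), define \(\sigma(t)=u\) if \(t=t_1\) and \(\sigma(t)=\ell\) otherwise, using decidable equality on tokens, or classical choice in the proof. Then \(\sigma(t_1)=u\) and \(\sigma(t_2)=\ell\), both in \(I\), so \(\sigma\) is token-consistent with \(e\). The value is \(\Eval(\sigma,e)=u-\ell\), which is nonzero because \(\ell<u\). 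So \(u-\ell\in\Encl(e)\) and \(u-\ell\notin\{0\}\), which refutes the containment.

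The only delicate step is building this two-valued environment. It needs the hypothesis \(t_1\neq t_2\) to establish \(\sigma(t_2)=\ell\), and it needs an if-then-else on tokens, which requires decidable equality (or classical logic in Lean). The remaining steps are routine unfolding of \(\TokenConsistent\) and \(\Eval\) together with rational arithmetic. This is exactly where the argument departs from Same-token cancellation: there, the single token forces both occurrences to take the same value, so no such separating environment exists.
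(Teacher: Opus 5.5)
Your proof is correct and matches the paper's argument. The forward containment comes from the exact-target criterion with both tokens assigned the same endpoint, and the reverse fails through the environment $\sigma(t_1)=u$, $\sigma(t_2)=\ell$, which realizes $u-\ell\neq 0$; your explicit if-then-else construction of that environment, using $t_1\neq t_2$, only spells out a step the paper leaves implicit.
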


\begin{proof}
The value \(0\) is realizable by assigning both tokens the same endpoint of
\(I\), so the exact-target criterion gives the forward containment rewrite.
The reverse containment fails because the endpoint assignment
\(\sigma(t_1)=u\), \(\sigma(t_2)=\ell\) realizes \(u-\ell\), which is nonzero.
Thus the source enclosure is not the singleton \(\{0\}\).  The mechanized
statement is
\lean{distinct_token_subtraction_one_way_only_to_exact_zero}.
\end{proof}

\subsection{Background subtraction}

Let \(s=\Meas{t_s}{I_s}{d}\) be a signal measurement and let background
measurements use interval \(I_b\).  Assume the relevant tokens are distinct
where stated.

\begin{theorem}[Shared-background subtraction]
If the same background token \(t_b\) is used in both background positions and
\(t_s\neq t_b\), then
\[
\begin{aligned}
  &(\Meas{t_s}{I_s}{d}+\Meas{t_b}{I_b}{d})-\Meas{t_b}{I_b}{d}\\
  &\qquad\Interchangeable \Meas{t_s}{I_s}{d}.
\end{aligned}
\]
\end{theorem}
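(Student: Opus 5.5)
We prove set equality of the two enclosures by double inclusion, i.e.\ both directions of \(\RewritesTo\), using the equivalence in the Basic laws proposition. Write \(e\) for the source expression \((\Meas{t_s}{I_s}{d}+\Meas{t_b}{I_b}{d})-\Meas{t_b}{I_b}{d}\) and \(s=\Meas{t_s}{I_s}{d}\).

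The first step unfolds the structural definition of token consistency. For \(e\), \(\TokenConsistent(\sigma,e)\) is the conjunction of three leaf conditions. Two of them are the identical condition \(\sigma(t_b)\in I_b\), so the whole condition is equivalent to \(\sigma(t_s)\in I_s \wedge \sigma(t_b)\in I_b\). For \(s\), the condition is just \(\sigma(t_s)\in I_s\). By the evaluation clauses, \(\Eval(\sigma,e)=(\sigma(t_s)+\sigma(t_b))-\sigma(t_b)\), and this equals \(\sigma(t_s)\) by ring arithmetic in \(\Rat\).

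For \(\Encl(e)\subseteq\Encl(s)\), take \(v\in\Encl(e)\) witnessed by \(\sigma\). Dropping the background condition shows that \(\sigma\) is token-consistent with \(s\), and the arithmetic identity gives \(\Eval(\sigma,s)=v\).

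For \(\Encl(s)\subseteq\Encl(e)\), take \(v\) witnessed by \(\sigma\) with \(\sigma(t_s)\in I_s\). In general \(\sigma(t_b)\) need not lie in \(I_b\), so we modify the environment. Define \(\sigma'=\sigma[t_b\mapsto \ell_b]\), where \(\ell_b\) is the lower endpoint of \(I_b=[\ell_b,u_b]\); it lies in \(I_b\) because \(\ell_b\le u_b\). The hypothesis \(t_s\neq t_b\) is used exactly here: it guarantees \(\sigma'(t_s)=\sigma(t_s)\in I_s\). Then \(\sigma'\) is token-consistent with \(e\), and \(\Eval(\sigma',e)=\sigma'(t_s)=v\).

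The main obstacle is this environment-update step. We must confirm that updating at \(t_b\) does not disturb the signal token, and in Lean we will rewrite with \texttt{Function.update\_noteq} using \(t_s\neq t_b\). Everything else is unfolding the definitions plus one ring-normalization step, such as \texttt{add\_sub\_cancel}.
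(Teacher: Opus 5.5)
Your proof is correct and is the same argument as the paper's: the shared token forces both background leaves to evaluate to \(\sigma(t_b)\), so the background contribution cancels and the enclosure reduces to the signal values. Your explicit reset \(\sigma[t_b\mapsto\ell_b]\) for the inclusion \(\Encl(s)\subseteq\Encl(e)\) spells out a step the paper's sketch leaves implicit, and it uses \(t_s\neq t_b\) exactly where the paper's remark says it matters (a single token would be governed by the intersection of two interval declarations).
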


\begin{proof}
Both background occurrences evaluate to \(\sigma(t_b)\), so the background
contribution cancels exactly and the remaining values are precisely the signal
values.  The side condition \(t_s\neq t_b\) is genuine in the current
formalization: one token carrying two interval declarations would be governed
by their intersection.  The Lean theorem is
\lean{shared_offset_subtraction_interchangeable_signal}.
\end{proof}

\begin{theorem}[Distinct-background subtraction]
Let \(t_s\neq t_{b_1}\), \(t_s\neq t_{b_2}\), and
\(t_{b_1}\neq t_{b_2}\).  If \(I_b\) is nondegenerate, then
\[
\begin{aligned}
  &(\Meas{t_s}{I_s}{d}+\Meas{t_{b_1}}{I_b}{d})
     -\Meas{t_{b_2}}{I_b}{d}\\
  &\qquad\OneWayOnly\, \Meas{t_s}{I_s}{d}.
\end{aligned}
\]
\end{theorem}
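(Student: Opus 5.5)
We argue directly from the definition of \(\OneWayOnly\): we prove the containment \(\Encl(\Meas{t_s}{I_s}{d})\subseteq\Encl(e)\), where \(e\) is the source expression \((\Meas{t_s}{I_s}{d}+\Meas{t_{b_1}}{I_b}{d})-\Meas{t_{b_2}}{I_b}{d}\), and then exhibit a value of \(\Encl(e)\) that lies outside \(\Encl(\Meas{t_s}{I_s}{d})\). Write \(I_s=[\ell_s,u_s]\) and \(I_b=[\ell_b,u_b]\) with \(\ell_b<u_b\). Since \(\Meas{t_s}{I_s}{d}\) has only one leaf, its enclosure is exactly \(I_s\cap\Rat=[\ell_s,u_s]\). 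Indeed, any \(v\in[\ell_s,u_s]\) is realized by an environment with \(\sigma(t_s)=v\), and any consistent environment forces \(\sigma(t_s)\in I_s\).

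For the forward containment, we take \(v\in\Encl(\Meas{t_s}{I_s}{d})\) with a witnessing \(\sigma\). We modify \(\sigma\) into the environment \(\sigma'\) that agrees with \(\sigma\) on \(t_s\) and sets \(\sigma'(t_{b_1})=\sigma'(t_{b_2})=\ell_b\). The distinctness hypotheses \(t_s\neq t_{b_1}\) and \(t_s\neq t_{b_2}\) guarantee that this is a well-defined function update that does not disturb \(\sigma(t_s)\). Token consistency of \(\sigma'\) with \(e\) then reduces to three interval conditions, each of which holds. The evaluation gives \(\sigma(t_s)+\ell_b-\ell_b=v\). This follows the pattern of the distinct-token cancellation proof: both background leaves receive the same endpoint, so they cancel.

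For the failure of the reverse containment, we use the environment \(\sigma(t_s)=u_s\), \(\sigma(t_{b_1})=u_b\), \(\sigma(t_{b_2})=\ell_b\). This environment is well defined because the three tokens are pairwise distinct; here \(t_{b_1}\neq t_{b_2}\) is essential. It is also token-consistent, since each value is an endpoint of its declared interval. It realizes \(u_s+(u_b-\ell_b)\), which is strictly greater than \(u_s\) because \(I_b\) is nondegenerate. This value therefore lies outside \([\ell_s,u_s]\) and hence outside \(\Encl(\Meas{t_s}{I_s}{d})\), so \(\neg(\Meas{t_s}{I_s}{d}\RewritesTo e)\).

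The main obstacle is bookkeeping rather than mathematics. We must construct environments by function update and check that updates at distinct tokens commute and do not overwrite one another; in Lean this would be done with \texttt{Function.update} and the distinctness hypotheses via simp lemmas. We also need the characterization of the singleton-leaf enclosure as the interval itself, which we would prove as a small auxiliary lemma.
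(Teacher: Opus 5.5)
Your proof is correct and follows the paper's route. The forward containment comes from giving both background tokens a common value (you use \(\ell_b\)) while leaving \(\sigma(t_s)\) unchanged. The reverse containment fails through an endpoint assignment with nonzero background difference, which your choice \(\sigma(t_s)=u_s\), \(\sigma(t_{b_1})=u_b\), \(\sigma(t_{b_2})=\ell_b\) makes explicit. The consistency of \(\sigma(t_s)=u_s\) relies on \(\ell_s\leq u_s\), which is built into the paper's definition of intervals.
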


\begin{proof}
The forward containment is witnessed by assigning the two background tokens the
same value, which realizes every signal value inside the larger expression's
enclosure.  The reverse containment fails because the nondegenerate background
interval provides endpoint assignments whose background difference is nonzero,
producing values outside the plain signal enclosure.  This is mechanized as
\lean{distinct_background_subtraction_one_way_only_to_signal}.
\end{proof}

\subsection{Self-division}

Fix a positive nondegenerate interval \(I=[\ell,u]\) with \(0<\ell<u\).

\begin{theorem}[Same-token self-division]
For any token \(t\),
\[
  \Meas{t}{I}{d}/\Meas{t}{I}{d}
  \Interchangeable
  \Exact{1}{d}.
\]
\end{theorem}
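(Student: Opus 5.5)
By the exact-target criterion (the second equivalence of that proposition), it suffices to show that \(\Encl(\Meas{t}{I}{d}/\Meas{t}{I}{d}) = \{1\}\). I would prove the two inclusions separately, mirroring the same-token cancellation argument but adding a nonvanishing step for the denominator.

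For \(\Encl \subseteq \{1\}\), I will take any \(\sigma\) with \(\TokenConsistent(\sigma,\Meas{t}{I}{d}/\Meas{t}{I}{d})\). Unfolding consistency structurally gives the interval condition for each leaf. Both leaves carry the same token \(t\), so both conditions concern \(\sigma(t)\), and we get \(\ell\leq\sigma(t)\leq u\). Together with the hypothesis \(0<\ell\), this yields \(\sigma(t)>0\), hence \(\sigma(t)\neq 0\). Evaluation is \(\Eval = \sigma(t)/\sigma(t)\), and rational division by a nonzero value gives \(1\). This is the only place where totality of division matters. The side condition \(0<\ell\) is exactly what excludes the degenerate convention \(0/0=0\); without it the enclosure could contain \(0\). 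I regard this nonvanishing step as the main, if mild, obstacle. In Lean it should reduce to \texttt{div\_self} applied to \texttt{ne\_of\_gt} from \texttt{lt\_of\_lt\_of\_le}.

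For \(\{1\}\subseteq\Encl\), I need a consistent witness environment. I would choose \(\sigma\) to be the constant function with value \(\ell\) (or any function sending \(t\) to \(\ell\)). Since \(\ell\leq\ell\leq u\), the environment is token-consistent. By the argument above, it evaluates to \(\ell/\ell=1\) because \(\ell>0\).

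Combining the two inclusions gives the singleton enclosure, and the exact-target criterion then gives interchangeability with \(\Exact{1}{d}\). The strict inequality \(\ell<u\) is not needed here; it is only relevant to the distinct-token counterpart.
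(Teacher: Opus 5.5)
Your proposal is correct and follows essentially the same argument as the paper: token consistency together with \(0<\ell\) forces \(\sigma(t)\neq 0\), so every evaluation is \(\sigma(t)/\sigma(t)=1\), and a consistent environment such as \(\sigma(t)=\ell\) realizes \(1\), which gives the singleton enclosure \(\{1\}\). Your remark that only \(\ell\leq u\) and \(0<\ell\) are used here, not the strict inequality \(\ell<u\), is also accurate.
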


\begin{proof}
Token consistency and positivity imply \(\sigma(t)\neq 0\), and both
occurrences use that same nonzero value.  Therefore every evaluation yields
\(\sigma(t)/\sigma(t)=1\), and \(1\) is realized by any consistent assignment.
The Lean theorem is
\lean{same_token_division_interchangeable_exact_one_positive}.
\end{proof}

\begin{theorem}[Distinct-token self-division]
For \(t_1\neq t_2\),
\[
  \Meas{t_1}{I}{d}/\Meas{t_2}{I}{d}
  \OneWayOnly
  \Exact{1}{d}.
\]
\end{theorem}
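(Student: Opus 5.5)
The plan mirrors the distinct-token cancellation argument, with the positivity hypothesis \(0<\ell<u\) ensuring that division never hits the degenerate case \(x/0=0\). Unfolding \(\OneWayOnly\), two facts are needed: the forward containment
\[
\Meas{t_1}{I}{d}/\Meas{t_2}{I}{d}\RewritesTo\Exact{1}{d},
\]
and the failure of the reverse containment. By the exact-target criterion, the forward containment is equivalent to \(1\in\Encl(\Meas{t_1}{I}{d}/\Meas{t_2}{I}{d})\). For the reverse direction, \(\Exact{1}{d}\RewritesTo \Meas{t_1}{I}{d}/\Meas{t_2}{I}{d}\) means \(\Encl(\cdots)\subseteq\{1\}\), so it suffices to find one realizable quotient different from \(1\).

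For the forward step, take the environment \(\sigma\) with \(\sigma(t_1)=\sigma(t_2)=\ell\) (defined, say, as a constant function, or by case split on the token). It is token-consistent because \(\ell\le\ell\le u\) holds for both leaves. It evaluates to \(\ell/\ell\), which equals \(1\) since \(\ell>0\) gives \(\ell\neq 0\).

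For the reverse step, use \(t_1\neq t_2\) to define
\[
\sigma(t)=\begin{cases}u & t=t_1,\\ \ell & \text{otherwise},\end{cases}
\]
so that \(\sigma(t_1)=u\) and \(\sigma(t_2)=\ell\). This environment is consistent because both endpoints lie in \(I\), and it realizes \(u/\ell\). If \(u/\ell\) were \(1\), multiplying by \(\ell\neq 0\) would give \(u=\ell\), contradicting \(\ell<u\). Hence \(u/\ell\in\Encl(\cdots)\setminus\{1\}\), and the reverse containment fails.

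The only step needing real care is this last one: the nonvanishing of the denominator must be established explicitly before any field cancellation, because Lean's total division would otherwise turn \(u/0\) into \(0\). The positivity of \(\ell\), which also controls \(\sigma(t_2)\), handles this. Everything else is direct unfolding of the definitions of \(\TokenConsistent\) and \(\Encl\).
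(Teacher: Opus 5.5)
Your proposal is correct and follows essentially the same route as the paper. The paper also witnesses the forward containment by assigning both tokens $\ell$, and refutes the reverse containment with the endpoint assignment $\sigma(t_1)=u$, $\sigma(t_2)=\ell$, which realizes $u/\ell\neq 1$. Your explicit check that $\ell\neq 0$ before cancelling is the right care to take given total division.
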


\begin{proof}
The value \(1\) is realizable by assigning both tokens \(\ell\).  The reverse
containment fails because the endpoint assignment
\(\sigma(t_1)=u\), \(\sigma(t_2)=\ell\) realizes \(u/\ell\neq 1\).  The
mechanized statement is
\lean{distinct_token_division_one_way_only_to_exact_one_positive}.
\end{proof}

The expression \(\Exact{1}{d}\) follows the present syntactic convention that
exact leaves carry a dimension tag forward as written.  A richer dimension
algebra would normally assign a self-quotient a dimensionless type; that
refinement is outside the present formal core.

\section{A token-erasure limitation theorem for rewrite logic}

The preceding examples might suggest that the distinction can be recovered from
a richer token-erased summary.  The limitation theorem in this section rules out
that possibility for the token-erased summaries considered here, which preserve
all interval declarations, all dimension tags, and the token-erased expression
tree: such summaries still cannot determine the token-sensitive rewrite class.

\begin{definition}[Provenance-blind expression]
The type \(\BlindExpr\) mirrors the expression syntax except that measured
leaves carry only an interval and a dimension tag:
\[
  \mathsf{BlindMeas}(I,d).
\]
The forgetful map
\(\forgetTokens:\Expr\to\BlindExpr\) erases tokens from measured leaves and
acts homomorphically on arithmetic structure.
\[
\begin{aligned}
\forgetTokens(\Exact{q}{d}) &=& \mathsf{BlindExact}(q,d),\\
\forgetTokens(\Meas{t}{I}{d}) &=& \mathsf{BlindMeas}(I,d),\\
\forgetTokens(e_1\circ e_2) &=&
  \forgetTokens(e_1)\circ\forgetTokens(e_2),\\
\forgetTokens(-e) &=& -\forgetTokens(e),
\end{aligned}
\]
where \(\circ\in\{+,-,\cdot,/\}\).
\end{definition}

\begin{definition}[Blind enclosure]
The blind enclosure \(\BlindEncl\) interprets exact leaves as singletons,
measured leaves as their intervals, and arithmetic nodes compositionally.  It
does not have a shared-token environment, so repeated measured leaves are
treated through their interval summaries rather than through observation
identity.
\[
\begin{aligned}
\BlindEncl(\mathsf{BlindExact}(q,d)) &=& \{q\},\\
\BlindEncl(\mathsf{BlindMeas}([\ell,u],d)) &=&
  \{v\in\Rat\mid \ell\leq v\leq u\},\\
\BlindEncl(a+b) &=&
  \{v\mid \exists x\in\BlindEncl(a),\, y\in\BlindEncl(b).\\
  &&\hspace{5em} x+y=v\},
\end{aligned}
\]
with analogous compositional clauses for subtraction, multiplication,
division (using the same total rational division), and negation.
\end{definition}

\begin{theorem}[Blind agreement with rewrite-class disagreement]
Each of the following pairs has equal token-erased blind enclosure but different
token-sensitive rewrite class.
\begin{enumerate}
\item Same-token cancellation and distinct-token cancellation.
\item Shared-background subtraction and distinct-background subtraction.
\item Same-token self-division and distinct-token self-division.
\end{enumerate}
In every case, the first expression is interchangeable with the simplified
target, while the second is only one-way to that target.
\end{theorem}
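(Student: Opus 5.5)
The argument splits into two independent halves, blind agreement and rewrite-class disagreement.

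\emph{Blind agreement.} For each pair, I will compute \(\forgetTokens\) of both expressions and observe that they are syntactically identical \(\BlindExpr\) terms. Tokens are the only data erased, and the paired expressions differ only in token names at their measured leaves, with identical intervals, dimension tags, and tree shape. For cancellation, both \(\Meas{t}{I}{d}-\Meas{t}{I}{d}\) and \(\Meas{t_1}{I}{d}-\Meas{t_2}{I}{d}\) map to \(\mathsf{BlindMeas}(I,d)-\mathsf{BlindMeas}(I,d)\). For background subtraction, both map to \((\mathsf{BlindMeas}(I_s,d)+\mathsf{BlindMeas}(I_b,d))-\mathsf{BlindMeas}(I_b,d)\). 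For self-division, both map to \(\mathsf{BlindMeas}(I,d)/\mathsf{BlindMeas}(I,d)\). Since \(\BlindEncl\) is a function of the blind expression, the blind enclosures are equal by congruence. No computation of the blind enclosures is needed; in Lean this should reduce to \texttt{rfl} after unfolding \(\forgetTokens\).

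\emph{Rewrite-class disagreement.} For the first expression of each pair I invoke the same-token theorems of Section~4: same-token cancellation, shared-background subtraction (under \(t_s\neq t_b\)), and same-token self-division (under \(0<\ell<u\)). These give interchangeability with \(\Exact{0}{d}\), \(\Meas{t_s}{I_s}{d}\), and \(\Exact{1}{d}\) respectively. For the second expression I invoke the distinct-token theorems, which give \(\OneWayOnly\) to the same targets. I will carry the hypotheses each theorem needs (nondegeneracy, positivity, token distinctness) as the standing hypotheses of each item.

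To show that the classes actually differ, I observe that \(\OneWayOnly(e,e')\) includes \(\neg(e'\RewritesTo e)\). Interchangeability implies \(e'\RewritesTo e\) by the Basic laws proposition. So no expression can be both interchangeable with and one-way-only to the same target. The first expression therefore fails \(\OneWayOnly\), and the second fails \(\Interchangeable\); hence the two classifications are distinct.

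The only real obstacle is bookkeeping: I must make sure the paired instances use literally the same intervals and dimensions, so that the token-erased trees coincide. In particular, the shared background and both distinct background leaves must all carry \(I_b\), and the hypotheses for the two sides must be compatible, for example \(t_b\) chosen distinct from \(t_s\). I would state each item as a conjunction of the blind equality, the interchangeability of the first expression, and the one-way-only property of the second, and discharge each item by the three steps above.
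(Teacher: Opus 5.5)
Your proposal is correct and matches the paper's proof. Token erasure sends each pair to the same \(\BlindExpr\) term, so the blind enclosures coincide, and the rewrite-class separation is read off from the classification theorems of Section~4; your extra remark that interchangeability excludes one-way-only, via the basic laws, makes explicit the step the paper leaves implicit in saying the classes ``differ.''
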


\begin{proof}
For each pair, token erasure yields definitionally equal blind expressions, so
their blind enclosures are equal.  The rewrite-class separation is exactly the
classification theorem proved in the preceding section.  The corresponding
Lean theorem declarations are:
\begin{center}
\small
\begin{tabular}{@{}l@{}}
\lean{blind_cancellation_patterns_agree_but_rewrite_classes_differ}\\
\lean{blind_background_patterns_agree_but_rewrite_classes_differ}\\
\lean{blind_division_patterns_agree_but_rewrite_classes_differ}
\end{tabular}
\end{center}
\end{proof}

Thus token identity is not merely an implementation detail.  In this formal
fragment it is semantic information required to determine whether a candidate
simplification is an equality of warranted enclosures or only a one-way
entailment.  Intervals, dimensions, and token-erased syntax are jointly
insufficient for that classification.

\section{Mechanization and artifact lineage}

The formalization is available in the public repository
\cite{measurementProvenanceSemanticsRepo}.  The manuscript was checked against
commit
\[
\texttt{eede9350e14b9e0d9bb392d0b5b54873587f01d5}.
\]
At that commit, the Lean development builds under the repository-local
toolchain using
\[
  \texttt{./scripts/with-local-toolchain.sh lake build}.
\]
The source is also checked to contain no \texttt{sorry} or \texttt{admit}
placeholders by the repository's per-module scripts, which grep for both
terms before reporting success.

The principal source files are:
\begin{center}
\begin{tabular}{@{}p{0.39\linewidth}p{0.50\linewidth}@{}}
\lean{Token.lean} & opaque token carrier\\
\lean{Syntax.lean} & dimensions, intervals, and expressions\\
\lean{Semantics.lean} & environments, evaluation, enclosure, and containment\\
\lean{CoreTheorems.lean} & denotational cancellation and background results\\
\lean{BlindComparator.lean} & token-erasing comparator syntax and semantics\\
\lean{BlindComparatorTheorems.lean} & blind-comparator insufficiency result\\
\lean{RewriteClassification.lean} & rewrite classifications and theorem families above
\end{tabular}
\end{center}

The proof architecture is intentionally layered.  The rewrite layer is defined
over the denotational core rather than by modifying it.  The public artifact
also contains scripts for checking the cancellation, background-subtraction,
and blind-comparator theorem surfaces independently.

The source contains a small number of ground rational-arithmetic closures using
\texttt{native\_decide}, all in concrete interval examples such as
\([1,2]\).  The general theorem families stated above are independent of those
concrete convenience lemmas.  The artifact claim here is therefore precise:
the Lean files compile, the source contains no \texttt{sorry} or
\texttt{admit} placeholders, and the manuscript states the theorem families
checked by that build.

\section{Related work}

\paragraph{Term rewriting and equational reasoning.}
Standard term rewriting studies reduction relations, confluence, termination,
completion, and equational reasoning \cite{BaaderNipkow1998}.  Our contribution
isolates a semantic side condition that may be relevant before a rewrite
relation is used on expressions
containing measurement observations: two syntactically similar occurrences may
or may not denote the same hidden value.

\paragraph{Interval arithmetic and dependence.}
Classical interval arithmetic propagates enclosures through arithmetic
expressions \cite{Moore1966,MooreKearfottCloud2009,HickeyJuVanEmden2001}.
The dependency problem is that repeated occurrences of a variable may be
treated independently, causing over-approximation
\cite{Neumaier1990}.  Affine arithmetic tracks first-order correlations to
reduce such over-approximation \cite{deFigueiredoStolfi2004}.  The present
semantics is much smaller: it tracks only exact token identity, but it uses that
identity to classify rewrite claims rather than to optimize numerical range
bounds.

\paragraph{Measurement uncertainty.}
The Guide to the Expression of Uncertainty in Measurement presents the
standard metrological view that measurement results are accompanied by
uncertainty information \cite{JCGM2008}, and its Supplement~1 gives a Monte
Carlo procedure for propagating distributions through a measurement model
\cite{JCGM101_2008}.  Both treatments work with probability distributions or
coverage intervals on output quantities.  Our formalization takes a
deliberately weaker input: a closed interval of warranted exact values, with no
probability or covariance structure.  We do not model uncertainty propagation
in the JCGM sense; rather, we study one logical consequence of reusing or not
reusing a concrete observation, given that bounds are already declared.

\paragraph{Abstract interpretation.}
The set-valued enclosure map is naturally related to abstract interpretation:
value sets form an abstraction of concrete hidden-value evaluations, and
containment is the relevant order on claims
\cite{CousotCousot1977}.  We do not develop a general abstract interpreter or
fixpoint theory here.  The connection is used only to position the semantics:
token-sensitive environments refine the concrete states over which the
enclosure abstraction ranges.

\paragraph{Units of measure.}
Type systems for dimensions and units of measure, such as Kennedy's work on
relational parametricity and units \cite{Kennedy1997}, address dimensional
consistency.  Our current dimension component is intentionally weaker: it is a
tag carried by syntax, not a units calculus.  This limitation keeps the
token-identity phenomenon separate from dimensional typing.

\paragraph{Provenance.}
Database provenance distinguishes information about where data came from and
why query results exist \cite{BunemanKhannaTan2001}.  Provenance semirings give
an algebraic account of such annotations \cite{GreenKarvounarakisTannen2007},
the database-provenance literature has been surveyed by Cheney, Chiticariu, and
Tan \cite{CheneyChiticariuTan2009}, broader surveys cover database, workflow,
and scientific-pipeline provenance together
\cite{HerschelDiestelkaemperBenLamine2017}, and the W3C PROV data model
standardizes provenance interchange \cite{MoreauGrothCheneyLeboMiles2015}.  Our
tokens are far less expressive than these provenance frameworks: they record
only observation identity, not derivation, agency, or workflow structure.  The
blind comparator theorem shows that even this minimal provenance can be
logically relevant to rewrite classification.

\paragraph{Mechanized semantics.}
The development is mechanized in Lean 4, a theorem prover and programming
language designed for extensible interactive theorem proving
\cite{deMouraUllrich2021}.  In spirit, the paper follows the tradition of
mechanized metatheory: definitions are small enough to audit, and the stated
metatheorems are checked by a proof assistant.  The novelty here is the
particular measurement/provenance rewrite phenomenon, not a new proof-assistant
technique.

\section{Limitations}\label{sec:limitations}

We separate \emph{modeling choices}, which delimit the formal core but do not
weaken the stated theorems, from \emph{practical limitations}, which point to
results not provided by this paper.

\paragraph{Modeling choices.}
\begin{enumerate}
\item The value carrier is \(\Rat\) rather than \(\mathbb{R}\).  For the theorem
  families proved here, the separating witnesses are rational interval
  endpoints; a real-valued development would mainly add analytic infrastructure.
\item Intervals are closed rational intervals \([\ell,u]\).  More general
  enclosure shapes, such as open, unbounded, or disjoint intervals, would change
  the enclosure construction but not the role of token identity.
\item Dimensions are syntactic tags rather than a units calculus.  The
  token-identity phenomenon is logically independent of dimensional typing.
\item The language has only the elementary arithmetic constructors needed to
  state the three theorem families.
\end{enumerate}

\paragraph{Practical limitations.}
\begin{enumerate}
\item No decision procedure.  We do not provide an algorithm that decides
  \(e\RewritesTo e'\) or \(e\Interchangeable e'\) for arbitrary expressions,
  and we do not characterize the fragment on which such a decision procedure
  would be feasible.
\item No probability or covariance model.  Tokens carry observation identity,
  not a joint distribution.  The semantics therefore cannot represent
  partially correlated measurements, repeated measurements with known
  measurement-error variance, or the Monte Carlo propagation of Supplement~1 to
  the GUM \cite{JCGM101_2008}.
\item Atomic token identity only.  Two occurrences either share a token or do
  not.  Graded notions such as ``measured by the same instrument but on
  different days'' are not expressible in the present semantics; they would
  require equivalence relations or partial-equality structure on \(\Token\).
\item No completion or normalization theory.  The rewrite layer classifies
  individual rewrite claims; it does not give confluence, termination, or
  optimizer-soundness theorems.  Establishing such properties would require a
  strategy for choosing rewrite directions and a termination argument.
\item No empirical study.  The three theorem families are mechanized examples,
  not measurements of how often the distinction arises in scientific software.
  A corpus study of measurement-bearing code would be needed to assess
  practical frequency.
\end{enumerate}

These are explicit limits on the claim, not hidden assumptions.  Within the
small mechanized semantics, token identity can be necessary to determine whether
a candidate simplification is interchangeable with its source or only one-way by
enclosure containment.

\section{Conclusion}

Measurement-bearing expressions need a way to distinguish reusing one
observation from making a second observation with the same visible bounds.  A
token-sensitive enclosure semantics supplies that distinction with very little
machinery: token-consistent environments assign one hidden exact value per
observation token, and warranted enclosures collect the possible results.  Over
that semantics, equality of enclosures and one-way containment become different
rewrite classifications.

The Lean development proves that the distinction is not cosmetic.  Same-token
cancellation, shared-background subtraction, and same-token self-division are
interchangeable with their simplified forms.  Their distinct-token lookalikes
are only one-way.  A provenance-blind summary that preserves intervals,
dimension tags, and token-erased syntax cannot recover this classification.
The exact fragment remains conservative: without measurements, the rewrite
classifications collapse to ordinary equality.

The three examples also show that the split is not a peculiarity of
subtraction.  It appears around zero, around a scientifically common
background-subtraction pattern, and around a positive self-quotient.  The
operation changes; the semantic reason does not.

Several precise open problems follow from the present formal core.
\begin{enumerate}
\item \emph{Decidability of the rewrite classifications.}  Identify fragments
  of the expression language on which \(e\RewritesTo e'\) and
  \(e\Interchangeable e'\) are decidable.  The affine fragment, with sums,
  differences, and multiplication by exact rationals, and the linear-over-token
  fragment look most tractable, since their warranted enclosures are induced by
  linear constraints over the token environment.
\item \emph{Equational theory of enclosure equivalence.}  Characterize, by a
  complete set of axioms or rewrite rules, the largest class of algebraic
  identities that remain sound as enclosure interchangeability for
  measurement-bearing expressions.  Exact-fragment conservativity is the base
  case; the open question is which token-sensitive identities extend it.
\item \emph{Probabilistic and correlated extensions.}  Replace token
  environments by probability distributions over hidden-value assignments, and
  augment token identity with covariance or correlation structure, so that the
  framework can express partially correlated measurements and connect to the
  JCGM~101 Monte Carlo view \cite{JCGM101_2008}.  A first step is to identify
  which results of this paper survive when ``set of warranted values'' is
  replaced by ``warranted distribution over output values.''
\end{enumerate}

Solving any of these problems likely requires a richer rewrite calculus than
the one developed here, and probably a closer connection to abstract
interpretation or to relational interval domains.

The broader lesson for formal systems that manipulate measurement-bearing
expressions is simple: interval bounds and syntax alone may not carry enough
logical information.  Observation identity can be part of the semantics of a
rewrite claim.

\authorname{David B. Hulak}
\address{Independent Researcher}
\email{dbhulak@gmail.com}

\authorname{Arthur F. Ramos}
\address{Microsoft}
\email{arfreita@microsoft.com}

\authorname{Ruy J. G. B. de Queiroz}
\address{Centro de Inform\'atica, Universidade Federal de Pernambuco}
\email{ruy@cin.ufpe.br}

\end{document}